\newif\ifdoublecol

\doublecoltrue
\documentclass[twocolumn]{IEEEtran}
\IEEEoverridecommandlockouts

\usepackage{etex} 

\ifCLASSINFOpdf
  \usepackage[pdftex]{graphicx}

  \usepackage{epstopdf}
\else
  \usepackage[dvips]{graphicx}
  \graphicspath{{../eps/}}
  \DeclareGraphicsExtensions{.eps}
\fi
\usepackage{cite}

\usepackage[cmex10]{amsmath}
\usepackage{xcolor,colortbl}
\definecolor{col1}{HTML}{3891A6}
\definecolor{col2}{HTML}{EF5B5B}
\definecolor{col3}{HTML}{3DDC97}

\usepackage{amsmath,amsthm,relsize}
\usepackage{amsfonts, amssymb, cuted}
\usepackage{hyperref}      

\hypersetup{
    colorlinks=true,       
    linkcolor=blue,        
    citecolor=blue,        
    filecolor=blue,        
    urlcolor=blue          
}
\usepackage{cleveref}
\usepackage{mathtools}
\usepackage{algorithm}
\usepackage[noend]{algpseudocode}
\usepackage{cite}
\usepackage{soul}
\usepackage{tikz}
\usepackage{pgfplotstable}
\usepackage{pgfplots}
\usepackage{nicefrac}
\usepackage{enumitem}
\usepackage{support-caption}
\usepackage{subcaption}
\usepackage[font=footnotesize,labelfont=bf]{caption}
\usepackage{multirow}
\pgfplotsset{compat=1.15}
\usepackage{relsize}
\usetikzlibrary{patterns}
\usepackage{acro}
\usepackage{pifont}

\usepackage{todonotes}

\usepackage{mathtools}

\usepackage{array}
\usepackage{booktabs}

\usepackage{diagbox}
\usepackage{smartdiagram}
\usesmartdiagramlibrary{additions}
\usepackage{bm}
\usepackage{multicol}
\usepackage{subcaption}
\renewcommand{\thesubfigure}{\arabic{figure}} 
\renewcommand{\figurename}{Fig.}

\usepackage{tabularx}
\usepackage{colortbl}

\usetikzlibrary{plotmarks}
  \usetikzlibrary{arrows.meta}
  \usepgfplotslibrary{patchplots}
  \usepackage{grffile}
  \pgfplotsset{plot coordinates/math parser=false}
  \newlength\figureheight
  \newlength\figurewidth
   \pgfplotsset{compat=1.11,
    /pgfplots/ybar legend/.style={
    /pgfplots/legend image code/.code={%
       \draw[##1,/tikz/.cd,yshift=-0.25em]
        (0cm,0cm) rectangle (3em,8pt);},
   },
}

\usepgfplotslibrary{groupplots,units}
\pgfplotsset{
  compat=1.9,
  unit code/.code 2 args={\si{#1#2}} 
}
\usepackage{siunitx}

\usepackage{color}
\usepackage{colortbl}
\usepackage{multirow}

\newlength{\Oldarrayrulewidth}

\definecolor{intnull}{RGB}{213,229,255}
\definecolor{inteins}{RGB}{128,179,255}
\definecolor{intzwei}{RGB}{42,127,255}
\definecolor{intdrei}{RGB}{0,85,212}
\definecolor{intvier}{RGB}{0,51,128}
\definecolor{intfunf}{RGB}{0,34,85}

\usetikzlibrary{decorations.pathreplacing}

\newtheorem{lemma}{Lemma}

\usepackage{arydshln}

\newcommand{\herm}{^{\mbox{\scriptsize H}}}

\newcommand{\vbar}{\raisebox{.17ex}{\rule{.04em}{1.35ex}}}
\newcommand{\vbarind}{\raisebox{.01ex}{\rule{.04em}{1.1ex}}}
\newcommand{\R}{\ifmmode{\rm I}\hspace{-.2em}{\rm R} \else ${\rm I}\hspace{-.2em}{\rm R}$ \fi}
\newcommand{\T}{\ifmmode{\rm I}\hspace{-.2em}{\rm T} \else ${\rm I}\hspace{-.2em}{\rm T}$ \fi}
\newcommand{\N}{\ifmmode{\rm I}\hspace{-.2em}{\rm N} \else \mbox{${\rm I}\hspace{-.2em}{\rm N}$} \fi}
\newcommand{\B}{\ifmmode{\rm I}\hspace{-.2em}{\rm B} \else \mbox{${\rm I}\hspace{-.2em}{\rm B}$} \fi}
\newcommand{\Hil}{\ifmmode{\rm I}\hspace{-.2em}{\rm H} \else \mbox{${\rm I}\hspace{-.2em}{\rm H}$} \fi}
\newcommand{\C}{\ifmmode\hspace{.2em}\vbar\hspace{-.31em}{\rm C} \else \mbox{$\hspace{.2em}\vbar\hspace{-.31em}{\rm C}$} \fi}
\newcommand{\Cind}{\ifmmode\hspace{.2em}\vbarind\hspace{-.25em}{\rm C} \else \mbox{$\hspace{.2em}\vbarind\hspace{-.25em}{\rm C}$} \fi}
\newcommand{\Q}{\ifmmode\hspace{.2em}\vbar\hspace{-.31em}{\rm Q} \else \mbox{$\hspace{.2em}\vbar\hspace{-.31em}{\rm Q}$} \fi}
\newcommand{\Z}{\ifmmode{\rm Z}\hspace{-.28em}{\rm Z} \else ${\rm Z}\hspace{-.28em}{\rm Z}$ \fi}

\newtheorem{lem}{Lemma}

\newtheorem{exmp}{Example}
\theoremstyle{definition}

\newcommand{\CB}[0]{{\mathcal{B}}}

\newcommand{\CF}[0]{{\mathcal{F}}}

\newcommand{\CK}[0]{{\mathcal{K}}}

\newcommand{\CP}[0]{{\mathcal{P}}}

\newcommand{\CT}[0]{{\mathcal{T}}}

\newcommand{\Bw}[0]{{\mathbf{w}}}
\newcommand{\Bx}[0]{{\mathbf{x}}}
\newcommand{\By}[0]{{\mathbf{y}}}
\newcommand{\Bz}[0]{{\mathbf{z}}}

\newcommand{\BH}[0]{{\mathbf{H}}}

\newcommand{\BU}[0]{{\mathbf{U}}}

\newcommand{\SfA}[0]{{\mathsf{A}}}
\newcommand{\SfB}[0]{{\mathsf{B}}}
\newcommand{\SfC}[0]{{\mathsf{C}}}

\newcommand{\SfE}[0]{{\mathsf{E}}}
\newcommand{\SfF}[0]{{\mathsf{F}}}

\newcommand{\SfP}[0]{{\mathsf{P}}}

\newcommand{\SfS}[0]{{\mathsf{S}}}

\newcommand{\SfU}[0]{{\mathsf{U}}}
\newcommand{\SfV}[0]{{\mathsf{V}}}

\usepackage{tikz}

\usepackage{tikz}
\usetikzlibrary{arrows,
                calc,
                decorations.pathreplacing,
                calligraphy,
                matrix,
                positioning
                }

\DeclareAcronym{ADMM}{
    short = ADMM,
    long = alternating direction method of multipliers,
    list = Alternating Direction Method of Multipliers,
    tag = abbrev
}

\DeclareAcronym{AoA}{
    short = AoA,
    long = angle-of-arrival,
    list = Angle-of-Arrival,
    tag = abbrev
}

\DeclareAcronym{SISO}{
    short = SISO,
    long = single-input single-output,
    list = single-input single-output,
    tag = abbrev
}

\DeclareAcronym{MRT}{
    short = MRT,
    long = maximum ratio transmitter,
    list = maximum ratio transmitter,
    tag = abbrev
}

\DeclareAcronym{PDA}{
    short = PDA,
    long = placement delivery array,
    list = placement delivery array,
    tag = abbrev
}

\DeclareAcronym{EE}{
    short = EE,
    long = energy efficiency,
    list = energy efficiency,
    tag = abbrev
}

\DeclareAcronym{MDS}{
    short = MDS,
    long = maximum distance separation,
    list = maximum distance separation,
    tag = abbrev
}

\DeclareAcronym{SIC}{
    short = SIC,
    long = successive-interference-cancellation,
    list = successive-interference-cancellation,
    tag = abbrev
}

\DeclareAcronym{MAC}{
    short = MAC,
    long = multiple-access-channel,
    list = multiple-access-channel,
    tag = abbrev
}

\DeclareAcronym{AoD}{
    short = AoD,
    long = angle-of-departure,
    list = Angle-of-Departure,
    tag = abbrev
}

\DeclareAcronym{BB}{
    short = BB,
    long = base band,
    list = Base Band,
    tag = abbrev
}

\DeclareAcronym{BC}{
    short = BC,
    long = broadcast channel,
    list = Broadcast Channel,
    tag = abbrev
}

\DeclareAcronym{BS}{
    short = BS,
    long = base station,
    list = Base Station,
    tag = abbrev
}

\DeclareAcronym{BR}{
    short = BR,
    long = best response,
    list = Best Response, 
    tag = abbrev
}

\DeclareAcronym{CB}{
    short = CB,
    long = coordinated beamforming,
    list = Coordinated Beamforming,
    tag = abbrev
}

\DeclareAcronym{CC}{
    short = CC,
    long = coded caching,
    list = Coded Caching,
    tag = abbrev
}

\DeclareAcronym{CE}{
    short = CE,
    long = channel estimation,
    list = Channel Estimation,
    tag = abbrev
}

\DeclareAcronym{CoMP}{
    short = CoMP,
    long = coordinated multi-point transmission,
    list = Coordinated Multi-Point Transmission,
    tag = abbrev
}

\DeclareAcronym{CRAN}{
    short = C-RAN,
    long = cloud radio access network,
    list = Cloud Radio Access Network,
    tag = abbrev
}

\DeclareAcronym{CSE}{
    short = CSE,
    long = channel specific estimation,
    list = Channel Specific Estimation,
    tag = abbrev
}

\DeclareAcronym{CSI}{
    short = CSI,
    long = channel state information,
    list = Channel State Information,
    tag = abbrev
}

\DeclareAcronym{CSIT}{
    short = CSIT,
    long = channel state information at the transmitter,
    list = Channel State Information at the Transmitter,
    tag = abbrev
}

\DeclareAcronym{CU}{
    short = CU,
    long = central unit,
    list = Central Unit,
    tag = abbrev
}

\DeclareAcronym{D2D}{
    short = D2D,
    long = device-to-device,
    list = Device-to-Device,
    tag = abbrev
}

\DeclareAcronym{DE-ADMM}{
    short = DE-ADMM,
    long = direct estimation with alternating direction method of multipliers,
    list = Direct Estimation with Alternating Direction Method of Multipliers,
    tag = abbrev
}

\DeclareAcronym{DE-BR}{
    short = DE-BR,
    long = direct estimation with best response,
    list = Direct Estimation with Best Response,
    tag = abbrev
}

\DeclareAcronym{DE-SG}{
    short = DE-SG,
    long = direct estimation with stochastic gradient,
    list = Direct Estimation with Stochastic Gradient,
    tag = abbrev
}

\DeclareAcronym{DFT}{
	short = DFT,
	long = discrete fourier transform,
	list = Discrete Fourier Transform,
	tag = abbrev
}

\DeclareAcronym{DoF}{
    short = DoF,
    long = degrees of freedom,
    list = Degrees of Freedom,
    tag = abbrev
}

\DeclareAcronym{DL}{
    short = DL,
    long = downlink,
    list = Downlink,
    tag = abbrev
}

\DeclareAcronym{GD}{
	short = GD, 
	long = gradient descent,
	list = Gradeitn Descent,
	tag = abbrev
}

\DeclareAcronym{IBC}{
    short = IBC,
    long = interfering broadcast channel,
    list = Interfering Broadcast Channel,
    tag = abbrev
}

\DeclareAcronym{i.i.d.}{
    short = i.i.d.,
    long = independent and identically distributed,
    list = Independent and Identically Distributed,
    tag = abbrev
}

\DeclareAcronym{JP}{
    short = JP,
    long = joint processing,
    list = Joint Processing,
    tag = abbrev
}

\DeclareAcronym{KKT}{
    short = KKT,
    long = Karush-Kuhn-Tucker,
    tag = abbrev
}

\DeclareAcronym{LOS}{
	short = LOS,
	long = line-of-sight,
	list = Line-of-Sight,
	tag = abbrev
}

\DeclareAcronym{LS}{
    short = LS,
    long = least squares,
    list = Least Squares,
    tag = abbrev
}

\DeclareAcronym{LTE}{
    short = LTE,
    long = Long Term Evolution,
    tag = abbrev
}

\DeclareAcronym{LTE-A}{
    short = LTE-A,
    long = Long Term Evolution Advanced,
    tag = abbrev
}

\DeclareAcronym{MIMO}{
    short = MIMO,
    long = multiple-input multiple-output,
    list = Multiple-Input Multiple-Output,
    tag = abbrev
}

\DeclareAcronym{MISO}{
    short = MISO,
    long = multiple-input single-output,
    list = Multiple-Input Single-Output,
    tag = abbrev
}

\DeclareAcronym{MSE}{
    short = MSE,
    long = mean-squared error,
    list = Mean-Squared Error,
    tag = abbrev
}

\DeclareAcronym{MMSE}{
    short = MMSE,
    long = minimum mean-squared error,
    list = Minimum Mean-Squared Error,
    tag = abbrev
}

\DeclareAcronym{mmWave}{
	short = mmWave,
	long = millimeter wave,
	list = Millimeter Wave,
	tag = abbrev
}

\DeclareAcronym{MU-MIMO}{
    short = MU-MIMO,
    long = multi-user \ac{MIMO},
    list = Multi-User \ac{MIMO},
    tag = abbrev
}

\DeclareAcronym{OTA}{
    short = OTA,
    long = over-the-air,
    list = Over-the-Air,
    tag = abbrev
}

\DeclareAcronym{PSD}{
    short = PSD,
    long = positive semidefinite,
    list = Positive Semidefinite,
    tag = abbrev
}

\DeclareAcronym{QoS}{
	short = QoS,
	long = quality of service,
	list = Quality of Service,
	tag = abbrev
}

\DeclareAcronym{RCP}{
	short = RCP,
	long = remote central processor,
	list = Remote Central Processor,
	tag = abbrev
}

\DeclareAcronym{RRH}{
    short = RRH,
    long = remote radio head,
    list = Remote Radio Head,
    tag = abbrev
}

\DeclareAcronym{RSSI}{
    short = RSSI,
    long = received signal strength indicator,
    list = Received Signal Strength Indicator,
    tag = abbrev
}

\DeclareAcronym{RX}{
	short = RX,
	long = receiver,
	list = Receiver,
	tag = abbrev
}

\DeclareAcronym{SCA}{
    short = SCA,
    long = successive convex approximation,
    list = Successive Convex Approximation,
    tag = abbrev
}

\DeclareAcronym{SG}{
    short = SG,
    long = stochastic gradient,
    list = Stochastic Gradient,
    tag = abbrev
}

\DeclareAcronym{SNR}{
    short = SNR,
    long = signal-to-noise ratio,
    list = Signal-to-Noise Ratio,
    tag = abbrev
}

\DeclareAcronym{SINR}{
    short = SINR,
    long = signal-to-interference-plus-noise ratio,
    list = Signal-to-Interference-plus-Noise Ratio,
    tag = abbrev
}

\DeclareAcronym{SOCP}{
	short = SOCP, 
	long = second order cone program,
	list = Second Order Cone Program,
	tag = abbrev
}

\DeclareAcronym{SSE}{
    short = SSE,
    long = stream specific estimation,
    list = Stream Specific Estimation,
    tag = abbrev
}

\DeclareAcronym{SVD}{
	short = SVD,
	long = singular value decomposition,
	list = Singular Value Decomposition,
	tag = abbrev
}

\DeclareAcronym{TDD}{
	short = TDD,
	long = time division duplex,
	list = Time Division Duplex,
	tag = abbrev
}

\DeclareAcronym{TX}{
	short = TX,
	long = transmitter,
	list = Transmitter,
	tag = abbrev
}

\DeclareAcronym{UE}{
    short = UE,
    long = user equipment,
    list = User Equipment,
    tag = abbrev
}

\DeclareAcronym{UL}{
    short = UL,
    long = uplink,
    list = Uplink,
    tag = abbrev
}

\DeclareAcronym{ULA}{
	short = ULA,
	long = uniform linear array,
	list = Uniform Linear Array,
	tag = abbrev
}

\DeclareAcronym{UPA}{
    short = UPA,
    long = uniform planar array,
    list = Uniform Planar Array,
    tag = abbrev
}

\DeclareAcronym{WMMSE}{
    short = WMMSE,
    long = weighted minimum mean-squared error,
    list = Weighted Minimum Mean-Squared Error,
    tag = abbrev
}

\DeclareAcronym{WMSEMin}{
    short = WMSEMin,
    long = weighted sum \ac{MSE} minimization,
    list = Weighted sum \ac{MSE} Minimization,
    tag = abbrev
}

\DeclareAcronym{WBAN}{
	short = WBAN,
	long = wireless body area network,
	list = Wireless Body Area Network,
	tag = abbrev
}

\DeclareAcronym{WSRMax}{
    short = WSRMax,
    long = weighted sum rate maximization,
    list = Weighted Sum Rate Maximization,
    tag = abbrev
}

\pgfplotsset{compat=1.17}

\usetikzlibrary{matrix,decorations.pathreplacing,arrows.meta,calc,backgrounds}

\begin{document}

\title{
Asymmetric Stream Allocation and Linear Decodability in MIMO Coded Caching
}



\author{\IEEEauthorblockN{Mohammad NaseriTehrani,
~\IEEEmembership{Student~Member,~IEEE,}  
MohammadJavad Salehi,~\IEEEmembership{Member,~IEEE,} 
~and Antti T\"olli}~\IEEEmembership{Senior~Member,~IEEE}
\thanks{
The authors are affiliated with the University of Oulu, Finland. Emails: 
\{mohammad.naseritehrani, mohammadjavad.salehi, antti.tolli\}@oulu.fi
}
}

\maketitle

\renewcommand{\thesubfigure}{Fig.~\arabic{subfigure}}


\begin{abstract}


Coded caching (CC) can transform cache memory at network devices into an active communication resource and significantly enhance the Degrees of Freedom (DoF) of multi-input multi-output (MIMO) systems by jointly exploiting global caching and spatial multiplexing gains. Existing linearly decodable MIMO-CC designs, however, largely rely on symmetric stream allocation, where all scheduled users receive the same number of streams, which induces coarse DoF granularity and may leave spatial dimensions unused. This letter studies one-shot linearly decodable MIMO-CC delivery with arbitrary per-user stream allocations. We derive a sufficient stream-count decodability condition, expressed through per-user stream counts and multicast-codeword multiplicities, that generalizes the symmetric common-stream feasibility rule. Building on this condition, we develop a greedy multicast scheduling procedure with certified linear decodability, which redistributes coded multicast messages across transmission intervals to realize asymmetric stream allocations. Numerical results show that the proposed scheduler fills DoF-granularity gaps and improves finite-SNR symmetric rates over the state of the art.

\end{abstract}

\begin{IEEEkeywords}
\noindent coded caching, multicasting, MIMO communications, Degrees of freedom
\end{IEEEkeywords}

\section{Introduction}
\label{section:intro}
Coded caching (CC)~\cite{maddah2014fundamental} turns users' cache memories into an active communication resource by multicasting carefully designed codewords to groups of $t\!+\!1$ users, yielding a global caching gain $t$. In multiple-input single-output (MISO) systems, multicast beamforming serves multiple user groups while suppressing inter-group interference, combining caching and spatial multiplexing gains $L$ to achieve $t\!+\!L$ degrees of freedom (DoF)~\cite{shariatpanahi2016multi,tolli2017multi}.\color{black}

While MISO-CC is well-explored, the application of CC in multiple-input multiple-output (MIMO) systems has gained attention only recently. The optimal DoF and bounds for MIMO-CC schemes were studied in~\cite{cao2017fundamental} and~\cite{cao2019treating}, using complex interference alignment techniques. 
In~\cite{salehi2021MIMO}, insights from the shared-cache model~\cite{parrinello2019fundamental} were used to design a linear MIMO-CC solution achieving a DoF of $Gt\!+\!L$ with minimal subpacketization overhead, where $G$ is the number of receive antennas per user. The scheme, however, worked only under the restrictive conditions $\frac{L}{G}\!\in\!\mathbb{N}$ and $\frac{L}{G}\!\geq \!t$.
%
Later,~\cite{tehrani2024enhanced,naseritehrani2024multicast} generalized single-shot MIMO-CC and improved the DoF over~\cite{salehi2021MIMO} by jointly adjusting parallel streams decoded by each user and the number of target users per transmission, denoted by $\beta$ and $\Omega$, respectively. 
 Using signal-domain interference cancellation,~\cite{zhao2026vector}
exploits multi-antenna receivers through vector coded caching and
optimized power allocation.
\color{black}
More recently,~\cite{naseritehrani2026cache} proposed a bit-domain (XOR-based) delivery framework by introducing a new multicast scheduling scheme that improved practicality by ensuring linear decodability and bit-domain interference cancellation.
This scheduling enhanced finite-signal-to-noise ratio (SNR) performance by allowing the parameter $\beta$ to be flexibly selected from a set of feasible values, corresponding to $\Omega$ and denoted by $\CB_\Omega$.
%
%
However, the structure of the predefined set $\CB_\Omega$ restricted the number of admissible stream allocations, since symmetry across the allocated streams must be preserved, which can leave spatial dimensions unused.

This paper proposes a generalized one-shot MIMO-CC delivery framework that characterizes user-specific stream allocation under the linear decodability constraint, thereby enabling a broader class of bit-level multicast transmission designs beyond symmetric stream allocation in~\cite{naseritehrani2026cache} and filling the DoF-granularity gap therein. 
First, we derive a per-codeword linear decodability criterion based on per-user stream counts and multicast-codeword multiplicities, allowing both symmetric and asymmetric stream allocations. 
We then develop an asymmetric multicast scheduling scheme that redistributes coded multicast messages across transmission intervals to realize asymmetric per-user stream allocation while preserving linear decodability. 
Numerical results compare the proposed scheduler with symmetric MIMO-CC and conventional multiuser (MU)-MIMO baselines, quantifying the DoF-region enlargement, and finite-SNR symmetric-rate gains.
\color{blue}
\color{black}
\textbf{Notation}: $(.)\herm$ denotes Hermitian. For $J\!\!\in\!\!\mathbb N$ and $x\!\!\in\!\!\mathbb R$, $[J] \equiv \{1,\ldots,J\}$ and $[x]^{+}\!=\!\max\{x,0\}$. Boldface upper- and lower-case letters denote matrices and vectors, and calligraphic letters denote sets. For a set $\CK$, $|\CK|$ is its cardinality and $\CK\backslash\CT$ its set difference from $\CT$. 
$\SfB$ is a super-set and $|\SfB|$ its size; super-sets are multi-sets, allowing repeated entries. 

\section{System Model}
\label{section:sys_model}
We consider a MIMO system where a base station (BS) with $L$ transmit antennas serves $K$ cache-enabled users, each with $G$ receive antennas. The library $\CF$ contains $N$ equal-sized files, and each user can cache $M$ files in its memory, yielding the global CC gain $t\triangleq KM/N$, i.e., the aggregate normalized cache size. During placement, each file $W\in\CF$ is split into
$\binom{K}{t}$ subfiles $W_{\CP}$, where
$\CP\subseteq[K]$, $|\CP|=t$, and $W_{\CP}$ is cached by every user $k\in\CP$~\cite{maddah2014fundamental}.

At the beginning of the delivery phase, each user $k$ reveals its requested file $W_k \in \CF$ to the BS. For each subset of users $\CK(j)$ with $|\CK(j)| = \Omega$, $j\in[J]$, the BS constructs and transmits (e.g., in consecutive time slots)  a set of 
$I$ transmission vectors $\Bx(j,i)$, each delivering parts of the requested data to every user $k \in \CK(j)$. This results in a total number of $J \times I$ time intervals, where $J = \binom{K}{\Omega}$, and $I$ and $\Omega$ are defined by the delivery algorithm. 

Delivery may further split each packet $W_{\CP}$ into smaller subpackets
$W_{\CP}^{q}$, where $q$ distinguishes them
to avoid duplicate
transmissions and is omitted for simplicity. 
\color{black}
In time slot $i \in [I]$ for user set $\CK(j)$, the transmission vector $\Bx(j,i)$ is constructed as 
\begin{equation}
\label{eq:general_trans_vector}
\begin{aligned}
    \Bx(j,i) = \sum\nolimits_{\CT \in \SfS(j,i)} \Bw_{\CT} X_{\CT},
    \end{aligned}
\end{equation}
%
where $X_{\CT} := \bigoplus_{k \in \CT} W_{\CT \backslash \{k\},k}$ represents a codeword generated for a user subset $\CT \subseteq \CK(j)$ with $|\CT| = t+1$ ($\oplus$ denotes the bit-wise XOR operation), $\SfS(j,i)$ is the super-set of codeword indices in time slot $i$,
and $\Bw_{\CT}$ is the multicast beamforming (precoder) vector 
for $X_{\CT}$.
After the transmission of $\Bx(j,i)$, user $k \in \CK(j)$ receives $\By_k(j,i) = \BH_k(j,i) \Bx(j,i) + \Bz_k(j,i)$,
where $\BH_k(j,i)\in\mathbb{C}^{G \times L}$ is the block-fading channel matrix between the BS and user $k$ in interval $i$ with i.i.d. entries drawn from $\mathcal{CN}(0, 1)$, and $\Bz_k(j,i)\sim\mathcal{CN}(\mathbf{0}, N_0 \mathbf{I})$ represents the noise. 
The full channel state information (CSI) is assumed to be available at the BS.
%

Let $\Theta$ denote the final \emph{subpacketization}, including splitting factors from both placement and delivery phases. 
Using $R(j,i)$ (file/second) to denote the max-min transmission rate of $\Bx(j,i)$ that enables successful decoding at every user $k \in \CK(j)$, the transmission time of $\Bx(j,i)$ is $T(j,i) = \nicefrac{1}{(\Theta R(j,i))}$ seconds. The total delivery time is then $T_{\mathrm{total}} = \sum_{j \in [J]}\sum_{i\in[I]} T(j,i)$, and the symmetric rate is defined as $R_{\mathrm{sym}} = \nicefrac{K}{T_{\mathrm{total}}}$ (files/second).
Throughout the rest of the paper, we consider transmission intervals for a particular subset $\CK(j)$ and drop the $j$ index. The same operation is repeated for all other subsets of users with size $\Omega$.



\textbf{Reference symmetric scheduling.}
Let $\SfS^{\CK}$ denote the set of all the $t+1$-subsets of the target user subset $\CK$, so each user $k \in \CK$ appears in $\beta_{\mathrm{MC}} = \binom{\Omega-1}{t}$ sets in $\SfS^{\CK}$. One can construct a single-interval transmission ($I=1$) using all
codeword indices in $\SfS^{\CK}$ as
$\sum_{\CT\in\SfS^{\CK}}X_{\CT}\Bw_{\CT}$ and multicast it to every
$k\in\CK$. However, if $G<\beta_{\rm MC}$, decoding requires
computationally demanding successive interference cancellation (SIC),
whereas if $G\gg\beta_{\rm MC}$, far fewer streams are decoded than
the available Rx dimensions.
To address these issues, the hypergraph-based scheduling in~\cite{naseritehrani2026cache} flexibly selects the number of parallel decoded streams per user in each interval, \(\beta\), from the feasible set
%
\begin{equation}
\begin{aligned}
   \CB_{\Omega} = \Big\{ \hat{\eta} \hat{\beta} \mid \frac{\hat{\delta} \hat{S}}{\hat{\eta}} \in \mathbb{N}, \hat{\eta}  \leq \min (\eta_{\rm Tx},  \eta_{\rm Rx}) \Big\},
\end{aligned} 
\label{eq:lin_fullmc_mac}
\end{equation}
where
\begin{equation}
\label{eq:eta_def}
    \eta_{\rm Tx} = \frac{L \hat{S}}{1 + (\Omega - t - 1) \hat{S} \hat{\beta}},\qquad
    \eta_{\rm Rx} = \frac{G}{\hat{\beta}}
\end{equation}
represent 
the transmit-side null-space and receive-dimension limits, respectively, to ensure linear decodability, and are imposed by the symmetric stream allocation structure.
%
%
In~\eqref{eq:lin_fullmc_mac}, $\hat{\eta}$ and $\hat{\delta}$ are two general integers, and
\begin{equation}
\begin{aligned}\label{eq:lin_fullmc_mac1}
    \hspace{-2.mm}\hat{\beta} = \frac{t+1}{\mathrm{gcd}(t+1,\Omega)}, \quad \hat{B} = \frac{\Omega}{\mathrm{gcd}(t+1,\Omega)}, \quad \hat{S} = \frac{\binom{\Omega}{t+1}}{\hat{B}}.
\end{aligned}
\end{equation}
The scheduling in~\cite{naseritehrani2026cache} first organizes the
codeword indices in $\SfS^\CK$ into a $\hat B\times\hat S$ table
$\mathtt T_1$, with each user appearing in exactly $\hat\beta$ indices
per column. Concatenating $\hat{\delta}$ copies of $\mathtt{T}_1$
forms the $\hat{B}\times(\hat{\delta}\hat{S})$ table $\mathtt{T}_2$.
The $B\times S$ reference table $\mathtt{T}_{\rm ref}$, with
$B=\hat{\eta}\hat{B}$ and
$S=\frac{\hat{\delta}\hat{S}}{\hat{\eta}}$, is then obtained by
merging the codeword indices of every $\hat{\eta}$ consecutive columns
of $\mathtt{T}_2$. Hence, each user $k\in\CK$ appears in exactly
$\beta=\hat{\eta}\hat{\beta}$ indices per column and decodes $\beta$
streams when each column is transmitted in a separate interval
($I=S$).
 \color{black}
While this scheduling provides a flexible linear MIMO-CC solution,
the constraint imposed by
$\min(\eta_{\rm Tx},\eta_{\rm Rx})$ in the definition of $\CB_{\Omega}$ in~\eqref{eq:lin_fullmc_mac} 
restricts the symmetric scheduler to a coarse set of common-\(\beta\) operating points, and can leave gaps between feasible stream-allocation values.
For example, when \((L,G,t,\Omega)\!=\!(11,8,2,4)\), the feasible set reduces to \(\beta\! \in\! \CB_{\Omega=4}\! =\! \{3,6\}\), thereby excluding \(\beta \! \in \!\{4,5,7,8\}\).
This structural limitation can restrict the feasible operating points under symmetric allocation, which can restrict both the achievable DoF and symmetric rate performance optimized over \(\beta \in \CB_\Omega\). 
%
\color{black}
To overcome this issue, we relax the common-stream constraint and construct asymmetric schedules by redistributing the multicast indices of the symmetric reference table in~\cite{naseritehrani2026cache}, while preserving linear decodability.\color{black}
\section{Linear Decodability for MIMO-CC}
We next characterize a general linear decodability framework for the MIMO-CC transmission model in~\eqref{eq:general_trans_vector} as illustrated in Figure~\ref{fig:system_model}, directly in terms of per-user stream counts and multicast-codeword multiplicities, thereby supporting both symmetric and asymmetric per-user stream allocations.
Consider a general scheduling and a time interval $i \in [I]$. 
Let $\theta_{\CT}(i)$ denote the number of times a multicast group $\CT$ appears in $\SfS(i)$, and let  \(\beta_k(i)\) denote the number of parallel streams sent to a user $k \in \CK$ in this interval.

\begin{figure*}[t]
\centering
\begin{minipage}[t]{0.46\textwidth}
  \centering
  \includegraphics[height=3.4cm]{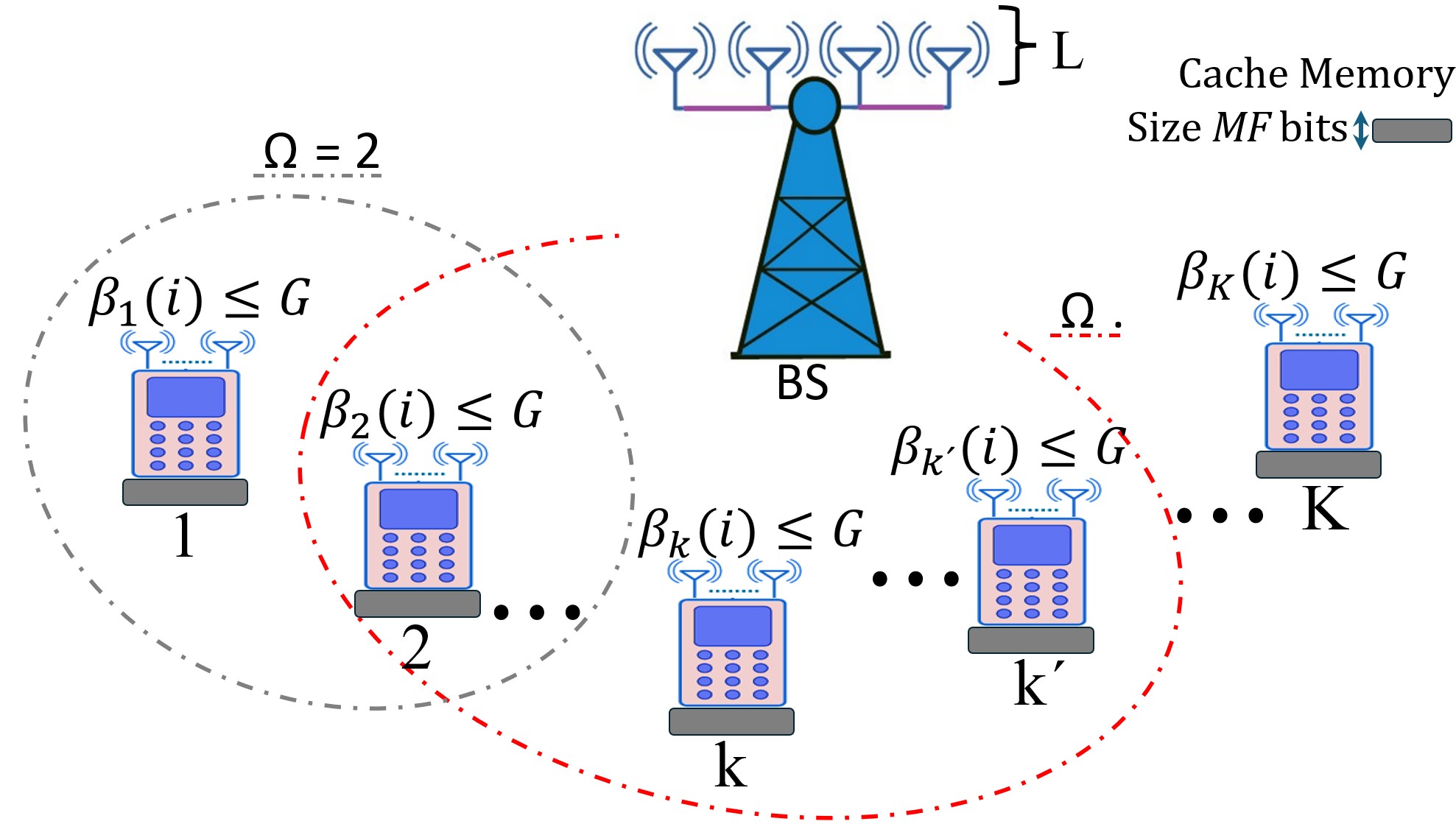}
  \caption{MIMO-CC scheduling: serving arbitrary $\Omega$ users per interval, each receiving asymmetric streams $\beta_k(i)\leq G$ per sub-interval.}
  \label{fig:system_model}
\end{minipage}\hfill
\begin{minipage}[t]{0.48\textwidth}
  \centering
  \resizebox{.8\linewidth}{!}{%
  \begin{tikzpicture}[x=1cm,y=1cm]

\newcommand{\BaseRedLineYOffset}{0.12cm}
\newcommand{\BaseRedLineXPad}{0.12cm}
\newcommand{\CHeaderYOffset}{0.32cm}     
\newcommand{\CTopBraceYOffset}{0.52cm}   
\newcommand{\CILabelXShift}{-0.82cm}     
\newcommand{\TableCellWidth}{0.82cm}     
\newcommand{\TableCellHeight}{0.53cm}
\newcommand{\AddCellHeight}{0.58cm}
\newcommand{\TableEntryFont}{\Large}

\tikzset{
  blk/.style={
    matrix of math nodes,
    nodes={
      minimum width=\TableCellWidth,
      minimum height=\TableCellHeight,
      inner sep=0pt,
      anchor=center,
      font=\TableEntryFont
    },
    column sep=0pt,
    row sep=0pt,
    ampersand replacement=\&
  },
  addblk/.style={
    matrix of math nodes,
    nodes={
      minimum width=\TableCellWidth,
      minimum height=\AddCellHeight,
      inner sep=0pt,
      anchor=center,
      font=\TableEntryFont
    },
    column sep=0pt,
    row sep=0pt,
    ampersand replacement=\&
  },
  tableline/.style={draw=black!65, line width=0.25pt},
  lab/.style={font=\large},
  biglab/.style={font=\LARGE},
  braceblue/.style={cyan!80!blue, line width=0.7pt, decorate, decoration={brace, amplitude=3pt}},
}

\newcommand{\X}[1]{X_{\scriptstyle #1}}

\newcommand{\BaseTable}{
  \X{12} \& \X{14}\\
  \X{34} \& \X{23}\\
  \X{25} \& \X{15}\\
  \X{13} \& \X{24}\\
  \X{45} \& \X{35}\\
}

\newcommand{\TwoColFrame}[1]{%
  \draw[tableline] (#1.north west) rectangle (#1.south east);
  \draw[tableline]
    ($(#1.north west)!0.5!(#1.north east)$) --
    ($(#1.south west)!0.5!(#1.south east)$);
}

\newcommand{\ShadeBlock}[1]{%
  \begin{pgfonlayer}{background}
    \fill[blue!25] (#1.north west) rectangle (#1.south east);
  \end{pgfonlayer}
}

\matrix[blk] (A) at (0,0.20) {\BaseTable};
\TwoColFrame{A}

\draw[braceblue, decoration={brace, amplitude=3pt, mirror}]
  ($(A.north west)+(-0.10,0.05)$) --
  ($(A.south west)+(-0.10,-0.05)$)
  node[midway,left=0.20cm,black,lab] {$B=5$};

\draw[braceblue, decoration={brace, amplitude=3pt, mirror}]
  ($(A.south west)+(-0.04,-0.12)$) --
  ($(A.south east)+(0.04,-0.12)$)
  node[midway,below=0.08cm,black,lab] {$S=2$};

\node[lab] at (-1.55,-1.00) {$(a)$};
\node[biglab] at (0,-2.30) {$\mathbf{T}_{\mathrm{ref}}$};

\draw[fill=blue!10, draw=blue!45!black, line width=0.55pt]
  (1.52,-0.60) -- (2.00,-0.60) -- (2.00,-1.03) -- (2.82,0.18) --
  (2.00,1.39) -- (2.00,0.96) -- (1.52,0.96) -- cycle;
\foreach \xx in {1.30,1.38,1.46}
  \draw[blue!45!black,line width=0.45pt] (\xx,-0.34)--(\xx,0.70);

\node[biglab] at (3.25,1.05) {$\mathbf{T}_{\mathrm{base}}$};

\matrix[blk] (B1) at (5.15,0.35) {\BaseTable};
\TwoColFrame{B1}

\matrix[blk] (B2) at (6.95,0.35) {\BaseTable};
\TwoColFrame{B2}

\foreach \xx in {8.55,8.95,9.35,9.75,10.15,10.55}
  \node[biglab] at (\xx,0.35) {$\cdot$};

\matrix[blk] (B3) at (12.20,0.35) {\BaseTable};
\TwoColFrame{B3}

\draw[braceblue, decoration={brace, amplitude=3pt}]
  ($(B1.north west)+(-0.06,0.18)$) --
  ($(B3.north east)+(0.06,0.18)$)
  node[midway,above=0.03cm,black,lab] {$\delta \times S = 7 \times 2$};

\draw[white!90!red, line width=1.3pt, dash pattern=on 8pt off 4pt on 1.3pt off 4pt]
  ([xshift=-\BaseRedLineXPad,yshift=-\BaseRedLineYOffset]B1.south west) --
  ([xshift=\BaseRedLineXPad,yshift=-\BaseRedLineYOffset]B3.south east);

\node[lab] at (3.55,-1.00) {$(b)$};

\node[biglab] at (3.45,-4.30) {$\mathbf{T}_{\mathrm{new}}$};

\matrix[blk] (C12) at (5.80,-3.35) {\BaseTable};
\TwoColFrame{C12}

\node[lab] at ($(C12.north west)+(\CILabelXShift,\CHeaderYOffset)$) {$i=$};
\node[lab] at ($(C12.north west)!0.25!(C12.north east)+(0,\CHeaderYOffset)$) {$1$};
\node[lab] at ($(C12.north west)!0.75!(C12.north east)+(0,\CHeaderYOffset)$) {$2$};

\matrix[addblk] (D12) at (5.80,-5.65) {
  \X{14} \& \X{12}\\
  \X{23} \& \X{34}\\
};
\ShadeBlock{D12}
\TwoColFrame{D12}

\matrix[blk] (C34) at (8.25,-3.35) {\BaseTable};
\TwoColFrame{C34}

\node[lab] at ($(C34.north west)!0.25!(C34.north east)+(0,\CHeaderYOffset)$) {$3$};
\node[lab] at ($(C34.north west)!0.75!(C34.north east)+(0,\CHeaderYOffset)$) {$4$};

\matrix[addblk] (D34) at (8.25,-5.65) {
  \X{15} \& \X{13}\\
  \X{24} \& \X{25}\\
};
\ShadeBlock{D34}
\TwoColFrame{D34}

\node[biglab] at (10.25,-4.05) {$\cdots$};

\matrix[blk] (C910) at (12.25,-3.35) {\BaseTable};
\TwoColFrame{C910}

\node[lab] at ($(C910.north west)!0.25!(C910.north east)+(0,\CHeaderYOffset)$) {$9$};
\node[lab] at ($(C910.north west)!0.75!(C910.north east)+(0,\CHeaderYOffset)$) {$10$};

\matrix[addblk] (D910) at (12.25,-5.65) {
  \X{24} \& \X{13}\\
  \X{35} \& \X{45}\\
};
\ShadeBlock{D910}
\TwoColFrame{D910}

\draw[braceblue, decoration={brace, amplitude=3pt}]
  ($(C12.north west)+(-0.08,\CTopBraceYOffset)$) --
  ($(C910.north east)+(0.08,\CTopBraceYOffset)$)
  node[midway,above=0.02cm,black,lab] {$\tilde S = 10$};

\draw[braceblue, decoration={brace, amplitude=3pt, mirror}]
  ($(D12.north west)+(-0.10,0.04)$) --
  ($(D12.south west)+(-0.10,-0.04)$)
  node[pos=0.5,left=0.12cm,yshift=0.18cm,black,lab] {$m=2$};

\draw[braceblue, decoration={brace, amplitude=3pt}]
  ($(C910.north east)+(0.17,0.04)$) --
  ($(D910.south east)+(0.17,-0.04)$)
  node[midway,right=0.12cm,black,lab] {$\tilde B = 7$};

\node[lab] at (3.95,-6.05) {$(c)$};

\end{tikzpicture} 
  }
  \caption{Example~1: $\Omega$=5, $t$=1, $L$=10, $G$=3; (a) symmetric scheduling, (b) replicated symmetric table with $\delta$, (c) asymmetric scheduling table.}
  \label{fig2}
\end{minipage}
\vspace{-.5cm}
\end{figure*}
\begin{lemma}[Sufficient stream-count decodability condition]\label{Th:DoF_sch}
With the transmission model in~\eqref{eq:general_trans_vector}, linear decoding of $\beta_k(i)$ streams is possible at each user $k \in \CK$ in interval $i$
if the parameters \(\theta_{\mathcal{T}}(i)\) and \(\beta_k(i)\) 
satisfy
\begin{equation} 
\begin{aligned}\label{DoF_bndd}
   &{\text{C1}:} \sum\nolimits_{k'\in \CK\backslash \CT}\beta_{k'}(i) + \theta_{\CT}(i)\le L,\: \; \; \forall \CT \in \SfS(i), \\ 
   &{\text{C2}:} \beta_k(i) \leq G,\:  \:\quad\quad\quad\quad\quad \qquad \qquad \forall k\in \CK.
\end{aligned}
\end{equation}

\end{lemma}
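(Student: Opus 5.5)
Throughout, fix the interval $i$ and drop the argument $i$. The plan is constructive: exhibit receive combiners and transmit precoders for which every user sees exactly its $\beta_k$ streams in an interference-free (up to XOR-removable terms) effective channel of full rank. The guiding observation is about codewords not intended for a user. If $X_{\CT}$ with $k\notin\CT$ leaks into user $k$'s receive space, it cannot be removed through cache side information, because the subfile of user $k$ inside it is absent. Such a codeword must therefore be zero-forced at user $k$. By contrast, any codeword $X_{\CT'}$ with $k\in\CT'$ is useful to $k$: each of its XOR components other than $W_{\CT'\backslash\{k\},k}$ is cached at $k$. So the only interference that has to be nulled spatially for $X_{\CT}$ is toward users in $\CK\backslash\CT$.

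\emph{Step 1 (receive side).} For each user $k$, take a combiner $\BU_k\in\mathbb C^{G\times\beta_k}$ with orthonormal columns; C2 guarantees this exists. Generic choices suffice, for instance the first $\beta_k$ left singular vectors of $\BH_k$. Define the effective channel $\bar\BH_k=\BU_k\herm\BH_k\in\mathbb C^{\beta_k\times L}$. Because the entries of $\BH_k$ are i.i.d.\ Gaussian, $\bar\BH_k$ has full row rank $\beta_k$ almost surely, and the stacked matrices of distinct users are jointly in general position.

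\emph{Step 2 (transmit side).} For each $\CT\in\SfS$, form the interference matrix $\bar\BH_{\CK\backslash\CT}=[\bar\BH_{k'}]_{k'\in\CK\backslash\CT}$, which has $\sum_{k'\notin\CT}\beta_{k'}$ rows. Its null space has dimension $L-\sum_{k'\notin\CT}\beta_{k'}$ almost surely, and by C1 this is at least $\theta_{\CT}$. We can therefore choose $\theta_{\CT}$ linearly independent precoders $\Bw_{\CT}^{(1)},\dots,\Bw_{\CT}^{(\theta_{\CT})}$ in this null space, one for each repetition of $\CT$ in $\SfS$. The repetitions carry distinct subpackets $W^q$, so they must be separable from one another.

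\emph{Step 3 (decodability at user $k$).} After combining, user $k$ observes $\bar\BH_k\sum_{\CT\ni k}\sum_{r}\Bw_{\CT}^{(r)}X_{\CT}^{(r)}+\text{noise}$. The number of such streams is $\sum_{\CT\ni k}\theta_{\CT}$, and we identify this with $\beta_k$, which is the stream-count definition. We then need the $\beta_k\times\beta_k$ matrix $[\bar\BH_k\Bw_{\CT}^{(r)}]$ to be invertible. Once it is, zero-forcing or linear MMSE separation recovers each $X_{\CT}^{(r)}$, and XOR with the cached subfiles yields $W_{\CT\backslash\{k\},k}$.

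\emph{Main obstacle.} The hard part is this invertibility claim. The precoders are chosen from null spaces that depend on the other users' channels, and they are not designed with reference to $\bar\BH_k$. I would handle it with a genericity argument. The determinant is a polynomial in the channel entries and in free parameters of the null-space bases. It suffices to show this polynomial is not identically zero, after which invertibility holds almost surely. To exhibit one realization where it is nonzero, note that the null space of $\bar\BH_{\CK\backslash\CT}$ does not depend on $\BH_k$, since $k\in\CT$. The precoders $\Bw_{\CT}^{(r)}$ with $\CT\ni k$ are chosen independently of $\BH_k$. Hence a generic $\bar\BH_k$, a $\beta_k\times L$ matrix, maps any $\beta_k$ linearly independent vectors to a full-rank set. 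The remaining task is to check that the full collection $\{\Bw_{\CT}^{(r)}\}_{\CT\ni k}$ is linearly independent. This can fail only in degenerate configurations, and I would rule it out by choosing the null-space bases randomly, with Gaussian coefficients, within each null space. Since null spaces for distinct $\CT$ have generic intersections, independence then holds with probability one. This final independence check is where I expect most of the care to be needed.
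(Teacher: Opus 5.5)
Your overall route is the paper's. You build the effective interference channel of the users outside $\CT$, use rank--nullity with C1 to fit $\theta_{\CT}$ independent precoders into its null space, and use C2 for the receive dimension. Your Steps 1--3 are sound, and so is the reduction of invertibility of $\bar{\BH}_k[\Bw_{\CT}^{(r)}]_{\CT\ni k}$ to linear independence of $\{\Bw_{\CT}^{(r)}\}_{\CT\ni k}$, since these do not depend on $\BH_k$. The paper handles the cross-codeword independence in one sentence ("their beamformers can be selected linearly independently from the respective null spaces"). You correctly flag it as the crux, but your justification does not hold.

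Write $N_{\CT}=\mathrm{Null}(\bar{\BH}_{\CK\backslash\CT})$. These subspaces do not have ``generic intersections'': every $N_{\CT}$ with $\CT\supseteq\mathcal{S}$ contains $\mathrm{Null}(\bar{\BH}_{\CK\backslash\mathcal{S}})$. For example, with $t=1$ and $L>\sum_{k'\neq k}\beta_{k'}$, all $N_{\{k,j\}}$ share a common subspace of dimension $L-\sum_{k'\neq k}\beta_{k'}$. Moreover, random vectors drawn inside prescribed subspaces are a.s.\ independent exactly when the Hall--Rado count holds:
\[
\dim\sum_{\CT\in\mathcal{Q}}N_{\CT}\ \ge\ \sum_{\CT\in\mathcal{Q}}\theta_{\CT}
\quad\text{for every subfamily } \mathcal{Q} \text{ of indices containing } k .
\]
C1 gives this only for single codewords. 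The multi-codeword count is exactly what your proposal leaves unproved.

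You can close the gap by induction on $|\mathcal{Q}|$. The claim is that for any family $\mathcal{Q}$ of distinct indices in $\SfS$, the random precoders $\{\Bw_{\CT}^{(r)}:\CT\in\mathcal{Q},\,r\le\theta_{\CT}\}$ are a.s.\ linearly independent.

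\emph{Base case.} For $|\mathcal{Q}|=1$ this is C1.

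\emph{Inductive step.} The members of $\mathcal{Q}$ are distinct $(t+1)$-sets, so some user $i$ lies in some but not all of them. Split $\mathcal{Q}=\mathcal{Q}_i\cup\mathcal{Q}_{\bar i}$ according to whether $i\in\CT$. Given a relation $\sum c_{\CT,r}\Bw_{\CT}^{(r)}=0$, apply $\bar{\BH}_i$.
\begin{itemize}
\item The $\mathcal{Q}_{\bar i}$ terms vanish by zero-forcing, since $i\in\CK\backslash\CT$.
\item The $\mathcal{Q}_i$ precoders are independent by induction and do not depend on $\BH_i$.
\item They number $\sum_{\CT\in\mathcal{Q}_i}\theta_{\CT}\le\beta_i$, so a generic $\bar{\BH}_i$ keeps them independent and their coefficients vanish.
\item The induction hypothesis on $\mathcal{Q}_{\bar i}$ then kills the remaining coefficients.
\end{itemize}
Taking $\mathcal{Q}=\{\CT\in\SfS:k\in\CT\}$ supplies the missing step, and also yields $\beta_k\le L$. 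This argument relies on $\BU_i$ depending on $\BH_i$ only, as in your Step 1, so that $\bar{\BH}_i$ is independent of the precoders of codewords containing $i$.
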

\begin{proof}
By definition, the beamformer $\Bw_{\CT}$
must null out the inter-stream interference caused by $X_{\CT}$ to every stream decoded by each user $k' \in \CK\backslash \CT$ (as interference at users in $\CT$ is removed using cached content).
Let $\BU_k(i) \in \mathbb{C}^{G \times \beta_k(i)}$ denote the receive beamforming matrix used by user~$k$ to decode its $\beta_k(i)$ parallel streams in interval $i$. 
Define the equivalent interference channel of $X_{\CT}$ as 
\begin{equation}
\label{eq:equi_intf_chan}
\bar{\BH}_{\CT}(i)   = [\BH_{k'}(i)\herm\BU_{k'}(i)]\herm, \quad \forall k' \in \CK\backslash \CT,
\end{equation}
where $[\;\cdot\;]$ denotes horizontal concatenation. To eliminate inter-stream interference, $\Bw_{\CT}$ must lie in the nullspace of $\bar{\BH}_{\CT}(i)$, i.e.,
    $\Bw_{\CT} \in \mathrm{Null}(\bar{\BH}_{\CT}(i))$. 
%
%
Now, to ensure that each user~$k \in \CK$ can linearly decode all its $\beta_k (i) \leq G$ parallel streams, it is essential that the respective transmit beamformers are linearly independent. 
This condition can be satisfied for parallel streams $X_{\CT}$
with different multicast indices $\CT$, since their beamformers can
be selected linearly independently from the respective null spaces. \color{black}
%
%
However, for substreams sharing the same multicast stream index $\CT$, linear independence is limited by the dimensions of $\mathrm{Null}(\bar{\BH}_{\CT}(i))$. From~\eqref{eq:equi_intf_chan}, $\bar{\BH}_{\CT}(i)$ consists of $\Omega-t-1$ concatenated matrices $\BH_{k'}(i)\herm\BU_{k'}(i)$, each of dimension $L \times \beta_{k'}(i)$, yielding $\bar{\BH}_{\CT}(i) \in \mathbb{C}^{\sum_{k'\in \CK\backslash \CT}\beta_{k'}(i) \times L}.$
Applying the rank-nullity theorem
and using $\rm rank(\bar{\mathbf H}_{\mathcal T}(i))
\leq\sum_{k'\in\CK\setminus\CT}\beta_{k'}(i)$, \color{black}we obtain
\begin{equation*}
 \label{eq:rank-null-theorem} 
\mathrm{nullity}(\bar{\BH}_{\CT}(i) )  = L - \mathrm{rank}(\bar{\BH}_{\CT}(i) ) \geq L - \sum\nolimits_{k'\in \CK\backslash \CT}\beta_{k'}(i).
 \end{equation*}
%

Hence, $C1$ guarantees $\theta_{\CT}(i)\leq \rm nullity(\bar{\BH}_{\CT}(i))$, ensuring sufficient transmit null-space dimensions for the corresponding substreams sharing the same multicast index. Since the above rank bound can be strict, e.g., when receive combining reduces the effective interference-channel rank below the stream-count bound, $C1$ can be conservative\color{black}. Condition $C2$ follows from the Rx-antenna constraint $\beta_k(i)\leq G$.
%
%
Note that if $\beta_k(i) = \beta$ for all $k \in \CK$, condition $C_1$ reduces to $(\Omega-t-1)\beta+\nicefrac{\beta}{\binom{\Omega-1}{t}} \leq L$, which recovers the symmetric stream-allocation feasibility condition in~\cite{naseritehrani2026cache}. 
\end{proof}

\section{Asymmetric Multicast MIMO-CC Scheduling}\label{sec:scheduling}

For each target set $\CK$, we start from the $B\times S$ reference
scheduling table $\mathtt T_{\rm ref}$ in~\cite{naseritehrani2026cache}, with each column defining a separate transmission interval.
To enable asymmetric per-user stream allocation, we concatenate ${\delta}$ copies of the reference table $\mathtt{T}_{\mathrm{ref}}$ to obtain a $B \times ({\delta} S)$ enlarged table $\mathtt{T}_{\mathrm{base}}$. 
Then, retain $\tilde{S}$ columns, and decompose the remaining ${\delta} S - \tilde{S}$ columns. The multicast indices of the decomposed columns are redistributed beneath the retained $\tilde{S}$ columns, such that each column gets $m$ additional indices. This results in the final scheduling table $\mathtt{T}_{\mathrm{new}}$ with size $\tilde{B} \times \tilde S$, where $\tilde{B} = B+m$. The codeword indices in each column of $\mathtt{T}_{\mathrm{new}}$ are sent in a separate interval (i.e., $I = \tilde S$). Since the total number of multicast indices in the decomposed columns is $B \times ({\delta} S - \tilde{S})$, the parameters must satisfy
\begin{equation}
\label{eq:m_delta_s_rel}
    m \times \tilde{S} = B \times ({\delta} S - \tilde{S}),
 \end{equation}
 which yields $\tilde{S} \!\!=\! \frac{BS {\delta}}{B+m}$. Any integer triple $(m, {\delta}, \tilde{S})$ satisfying~\eqref{eq:m_delta_s_rel} defines a
$\tilde{B}\!\times\!\tilde{S}$ table $\mathtt{T}_{\rm new}$. Since the DoF of base scheduling, i.e., the total decoded parallel streams, is $\mathrm{DoF}_{\mathrm{ref}} =\! B(t+1) \!=\! \Omega \times \beta$, and each of the $m$ added multicast indices contributes $t+\!1$ streams, the resulting scheduling DoF is
%
\begin{equation}
\label{eq:dof_new_sch}
\mathrm{DoF_{sch}} = \sum\nolimits_{k \in \CK} \beta_k = \mathrm{DoF_{\mathrm{ref}}} + m \times (t+1).
\end{equation}
\begin{exmp}
\label{exmp:main_example}
Consider a system with a large user pool and $(L,G,t)=(10,3,1)$. Under the symmetric scheduling, the optimized \(\Omega = 5\) and \(\beta = 2 \in \CB_{\Omega=5}\) yield $\mathrm{DoF}_{\mathrm{ref}} = 10$. For the target set $\CK=\{1,\ldots,5\}$, Fig.~\ref{fig2}(a) shows the reference scheduling with $S=2$ sub-intervals and $B=5$ codeword indices per
column. Now, repeating this base scheduling (while increasing the subpacketization) by a factor of ${\delta} = 7$ yields the replicated table in Fig.~\ref{fig2}(b), consisting of $2 \times 7 = 14$ transmissions. Decomposing the last $4$ transmissions and redistributing their codeword indices among the remaining ones, as shown in Fig.~\ref{fig2}(c), results in a new scheduling with $10$ transmission vectors. In each transmission, $4$ users decode $\beta_k = 3$ and $1$ user decodes $\beta_k = 2$ streams, giving $\mathrm{DoF}_{\mathrm{sch}} \!=\! 14 \!=\! 10 + 2 \times 2$. 
Using 
Lemma~\ref{Th:DoF_sch}, \color{black}
linear decodability is verified and preserved in every sub-interval.
%
%

Keeping $(\Omega,t,G,\beta)\!=(5,1,3,2)$ fixed, for
$m\!=\!1$ $(\mathrm{DoF}_{\rm sch}\!=\!12)$, a 
joint Tx/Rx
feasibility search performed without imposing C1-C2 finds a feasible
solution at $L\!=\!8$, although C1 is violated. For $m\!=\!2$
$(\mathrm{DoF}_{\rm sch}\!=\!14)$, rank-$9$ 
interference rules out
$L\!=\!9$, while at $L\!=\!10$, C1 certifies the construction and
the Tx/Rx search finds a feasible solution.
\end{exmp}

Let $\tilde\SfS(i)$, $i\!\in\![\tilde S]$ denote the superset of multicast indices in the $i$-th column of the scheduling table $\mathtt{T}_{\mathrm{new}}$, and let $\beta_k(i)$ be the number of sets $\CT\!\! \in\!\! \tilde{\SfS}(i)$ that include $k$.
%
%

Since each added multicast index contributes \(t+1\) streams, 
C2 of Lemma~\ref{Th:DoF_sch} necessarily requires $m(t+1)\!\leq\! \Omega(G-\beta)$, which gives 
\(m\!\leq\! \lfloor \!\frac{(G - \beta)\Omega}{t+1} \rfloor\).
\color{black}
%
%
%
%
During the table decomposition--reassignment process, adding a multicast index ${\CT}$ to column $i$ updates
$\beta_k(i)\!\leftarrow \beta_k(i)+1$ for $k\!\in\!\CT$ and
$\theta_{\CT}(i)\!\leftarrow \theta_{\CT}(i)+1$. 
Hence, the reassignment should favor multicast indices with limited overlap while preserving C1\!-C2.\color{black}
%

\subsection{Asymmetric multicast scheduling design:}
\label{subsec:asym_scheduling_design}
%
%
For the decomposition--reassignment stage, we propose a greedy procedure that selects low-overlap multicast indices for each $\tilde{\SfS}(i)$ while accepting only reassignments satisfying
\color{black}Lemma~\ref{Th:DoF_sch}\color{black}.
%
%
First, we define a one-to-one function $\psi(\cdot)$, mapping each column $s\in[S]$ of the reference scheduling table $\mathtt{T}_{\rm ref}$ to a column index $\psi(s) = \bar{s}$, 
such that \color{black}$\bar{s} \in [S]$ and $\bar{s}\neq s$ for $S>1$  \color{black}(e.g., we can use a simple cyclic mapping).
Let us also define $\SfS_{\rm ref}(s)$ as the superset of all multicast indices in the column~$s$ of $\mathtt{T}_{\rm ref}$. Then, for each baseline column $s\in[S]$, we construct a superset $\SfC_{s}$ satisfying the following properties:
\begin{enumerate}
    \item $|\SfC_s| = d$, where $d$ is a fixed value,
    \item Each element $\SfA \in \SfC_{s}$ is an unordered set of $m$ distinct codeword indices selected from $\SfS_{\rm ref}(\bar{s})$,
    \item $\SfC_{s}$ is $\sigma$-regular over $\SfS_{\rm ref}(\bar{s})$, i.e., every multicast index $\CT\in\SfS_{\rm ref}(\bar{s})$ appears exactly in $\sigma$ sets in $\SfC_{s}$.
\end{enumerate}
Note that, as $|\SfS_{\rm ref}(\bar{s})| = B$, these properties imply that $\sigma B = dm$. 
\color{black}
For a selected \(m\), let \(g_m=\gcd(B,m)\). The smallest positive integer solution of \(dm=B\sigma\) is \(d=\nicefrac{B}{g_m}\) and \(\sigma=\nicefrac{m}{g_m}\), yielding the smallest table repetition;
if this minimum pair admits no regular C1-C2-feasible assignment, \(d\) and \(\sigma\) may be jointly scaled by a common positive integer factor. 
A feasible regular assignment can be used directly; otherwise, $\SfC_s$ is constructed by the greedy procedure in Algorithm~\ref{alg:balanced_greedy_swap}.
\color{black}
%
%
\color{black}Given the collections \(\SfC_s\), \(s\in[S]\), we form \color{black} the final scheduling table $\mathtt{T}_{\rm new}$ by initially concatenating $\delta = d+\sigma$ copies of $\mathtt{T}_{\rm ref}$ to obtain $\mathtt{T}_{\rm base}$. Then, we keep $\tilde{S} = dS$ columns of $\mathtt{T}_{\rm base}$ and decompose the rest $\sigma S$ columns. The multicast indices in the decomposed columns are distributed under retained columns with the help of $\SfC_s$ sets: each retained column $i \in [\tilde S]$ is a replica of a column $s \in [S]$ of $\mathtt{T}_{\rm ref}$, so at the end of column $i$, we append codeword indices in one distinct $\SfA \in \SfC_s$. 
This yields a $(B+m)\times\tilde S$ scheduling table $\mathtt{T}_{\rm new}$, with the codeword indices of each column transmitted in a separate time interval. \color{black}
Accordingly, $\delta$ is a positive-integer multiple of $\nicefrac{(B+m)}{g_m}$; while $\mathrm{DoF}_{\rm sch}$ grows linearly with $m$
by~\eqref{eq:dof_new_sch}, the corresponding $\delta$-fold
subpacketization increase w.r.t.~\cite{naseritehrani2026cache} is governed by $\gcd(B,m)$.\color{black}
\begin{exmp}\label{exmp2}
Consider a large network with $(L,G,t,\Omega)=(11,6,2,5)$. Under symmetric scheduling, with 
$(\beta,S,B)=(3,2,5)$ and $\beta\in\CB_{\Omega=5}$, we get $\mathrm{DoF}_{\mathrm{ref}}=15$.
With the proposed scheduling, selecting $m=3$ increases the DoF to
$\mathrm{DoF}_{\mathrm{sch}}=24$.
Let the baseline columns $\SfS_{\rm ref}(1)\!=\!\{123,124,345,235,145\}$ and $\SfS_{\rm ref}(2)\!=\!\{125,134,234,245,135\}$, with multicast group brackets omitted for simplicity.
A valid choice of supersets is obtained as  
\begin{equation*}
       \small \begin{aligned}
             \SfC_{s=1} &= \big\{\{125,134,234\},\{125,234,135\},\\& \quad \{134,245,125\}, \{134,245,135\},\{234,135,245\}\big\}\\
             \SfC_{s=2}  &= \big\{\{123,345,124\},\{123,145,235\},\\& \quad \{124,345,235\}, \{124,235,145\},\{345,123,145\}\big\}.
             \end{aligned}
\end{equation*}
Here $g_m=1$, hence $d=5$, and each
$\CT\in\SfS_{\rm ref}(\bar{s})$ appears in exactly $\sigma=3$ elements
of $\SfC_s$. Accordingly, $\delta=8$ copies of $\mathtt{T}_{\rm ref}$ are concatenated, $\sigma S=6$ columns are decomposed, and each of the five retained replicas of $\SfS_{\rm ref}(1)$ is appended by multicast indices in one distinct element of $\SfC_{s=1}$.
\end{exmp}

\subsection{Constructing $\SfC_s$ supersets}
\label{subsec:constructing_Cs}
To construct $\SfC_s$ sets under low-overlap assignment and linear decodability constraints, we adopt the balanced greedy selection procedure in Algorithm~\ref{alg:balanced_greedy_swap}. For each target column $s$, the algorithm generates $d$ sets of multicast indices using the elements in a \emph{donor} column $\bar{s} = \psi(s)$. Each set, denoted by $\SfA$, is formed by sequentially selecting $m$ multicast indices $\CT$ from a dynamically updated feasible set $\SfF$. At each iteration, $\SfF$ contains those indices $\CT$ whose inclusion in $\SfA$ preserves linear decodability (verified by the \textsc{LinearFeasibleCheck} function), satisfies the pairwise overlap constraint with threshold $\tau$, and does not violate the prescribed regularity level $\sigma$, enforced via the quota variable $q(\CT)$.

If 
$\SfF=\emptyset$, a repair step swaps multicast indices between the current set \(\SfA\) and a previously constructed set \(\SfB\), while maintaining both the overlap and C1-C2 conditions. \color{black}The parameter \(J_{\max}\) limits the search effort for constructing each set. If a valid collection \(\SfC_s\) is not found, the search may be repeated with a larger \(J_{\max}\) or another
admissible \(\tau\in[[2(t+1)-\Omega]^+,t]\); such a failure only indicates that the current greedy search did not find a feasible assignment. For fixed \((m,d,\sigma,\tau,J_{\max})\), the procedure terminates after finitely many checks, and any returned schedule is certified linearly decodable by Lemma~\ref{Th:DoF_sch}.
\color{black}

\begin{algorithm}[t]
\caption{Balanced greedy selection for column $s \in [S]$}
\label{alg:balanced_greedy_swap}
\small
\begin{algorithmic}[1]
        \State $\SfC_{s} \gets \varnothing, \quad \bar{s}\gets\psi(s), \quad B \gets |\SfS_{\rm ref}(\bar{s})|, \quad \sigma \gets \frac{d \cdot m}{B}$
        \ForAll{$\CT \in \SfS_{\rm ref}(\bar{s})$}
             $q(\CT) \gets \sigma$
        \EndFor
        \While{$|\SfC_{s}| < d$}
            \State $\SfA \gets \varnothing, \quad j \gets 0$
            \While{$|\SfA| < m, j < J_{\max}$}
                \State $j \gets j+1$
                \State $\SfP \gets$ \textsc{LinearFeasibleCheck}($s,\bar{s},\SfA$)
                \State $\SfF \gets \{\CT \in \SfP : q(\CT) > 0, \: |\CT'\cap\CT|\leq\tau,\: \forall\,\CT'\in\SfA\}$
                \If{$\SfF = \varnothing$}
                    \State Choose $\SfB \in \SfC_{s}, \CT_B \in \SfB, \CT_A \in \SfA$
                    \State $\tilde{\SfA}\gets \SfA\setminus \color{black}\{\CT_A\}\color{black},\quad
                          \tilde{\SfB}\gets \SfB\setminus\color{black}\{\CT_B\}\color{black}$
                    \If{$\max\limits_{\CT \in \tilde{\SfA}} |\CT \cap \CT_B| \le \tau, \max\limits_{\CT \in \tilde{\SfB}} |\CT \cap \CT_A| \le \tau$}
                        \State $\SfA \gets \tilde{\SfA} \cup \{\CT_B\}$
                        \State $\SfB \gets \tilde{\SfB} \cup \{\CT_A\}$
                        \State \textbf{continue}
                    \EndIf                       
                \EndIf
                \State $\CT^\star \gets \arg\min_{\CT \in \SfF} \beta \sum_{\CT' \in \SfA} |\CT' \cap \CT| - q(\CT)$ 
                \State $\SfA \gets \SfA \cup \{\CT^\star\}$, \quad $q(\CT^\star) \gets q(\CT^\star) - 1$
            \EndWhile
            \If{$|\SfA| < m$}
                \textbf{break}
            \EndIf
            \State $\SfC_{s} \gets \SfC_{s} \cup \{\SfA\}$
        \EndWhile
\end{algorithmic}
\end{algorithm}
\begin{algorithm}[t]
\caption{Linear Decodability Check Function}
\small
\begin{algorithmic}[1]
  \small  \Function{LinearFeasibleCheck}{$s,\bar{s},\SfA$}
        \State $\SfP \gets \varnothing$
        \ForAll{$\CT \in \SfS_{\rm ref}(\bar{s})$}
            \ForAll{$k \in \CK$} $b(k) \gets 0$ \EndFor
            \State $\SfU \gets \SfS_{\rm ref}(s) \cup \SfA \cup \{\CT\}$
            \ForAll{$\CT' \in \SfU$} $c(\CT') \gets 0$ \EndFor
            \ForAll{$\CT' \in \SfU$} 
                \State $c(\CT') \gets c(\CT') + 1$
                \ForAll{$k \in \CT'$} $b(k) \gets b(k) + 1$ \EndFor
            \EndFor
            \ForAll{$\CT' \in \SfU$} $n(\CT') \gets c(\CT') + \sum_{k \in \CK \setminus \CT'} b(k)$ \EndFor
            \If{$\max\limits_{k \in \CK} b(k) \le G$ and $\max\limits_{\CT' \in \SfU} n(\CT') \le L$}
                \State $\SfP \gets \SfP \cup \{\CT\}$
            \EndIf
        \EndFor
        \State \Return $\SfP$
    \EndFunction
\end{algorithmic}\label{alg:LD_alg}
\end{algorithm}

\section{Simulation Results}
\label{section:Simulations}
\begin{figure*}[!t]
\centering
\begin{minipage}[t]{0.32\textwidth}
\centering
\input{Figs/Plot_scal_cc_gain_single}
{\footnotesize (a)}
\end{minipage}\hfill
\begin{minipage}[t]{0.32\textwidth}
\centering
\input{Figs/Plot_scal_RX_SM} 
{\footnotesize (b)}
\end{minipage}\hfill
\begin{minipage}[t]{0.32\textwidth}
\centering
    \centering
    \resizebox{0.95\columnwidth}{!}{%


    \begin{tikzpicture}

    \begin{axis}
    [
    axis lines = center,
    xlabel near ticks,
    xlabel = \smaller {SNR [dB]},
    ylabel = \smaller {Symmetric Rate [file/s]},
    ylabel near ticks,
    ymin = 17, 
    xmax = 30,
    legend style={
        nodes={scale=0.9, transform shape},
        at={(0,1)}, 
        anchor=north west,
        draw=black, 
        outer sep=2pt, 
        font=\tiny, 
    },
    ticklabel style={font=\smaller},
    grid=both,
    major grid style={line width=.2pt,draw=gray!30},
    ]



\addplot
    [dashed, line width=1.pt, mark = triangle, mark size=4pt,mark options={solid, fill=black!90}, black!90]
    table[y=RsymproposedL11G8t0omg11DoF11,x=SNR]{Figs/data_L11_G8.tex};
    \addlegendentry{\footnotesize MU-MIMO
    }

    \addplot
    [dashed, line width=1.pt, mark = star, mark size=4pt,mark options={solid, fill=red!90}, red!100]
    table[y=RsymproposedL11G8t2omg4DoF12,x=SNR]{Figs/data_L11_G8.tex};
    \addlegendentry{\footnotesize $\Omega$=4, {DoF}=12~\cite{naseritehrani2026cache}}

    \addplot
    [dashed,line width=1.pt,  mark = star, mark size=4pt , blue!90]
    table[y=RsymproposedL11G8t2omg4DoF15,x=SNR]{Figs/data_L11_G8.tex};
    \addlegendentry{\footnotesize $\Omega$=4, {DoF}=15}
    \addplot
    [mark = square,line width=1.pt, mark size=4pt, blue!90]
    table[y=RsymproposedL11G8t2omg4DoF18,x=SNR]{Figs/data_L11_G8.tex};
    \addlegendentry{\footnotesize $\Omega$=4, {DoF}=18}
    \addplot
    [dash dot,line width=1.pt,  mark = o , mark size=4pt,mark options={solid, fill=blue!90}, blue!90]
    table[y=RsymproposedL11G8t2omg4DoF21,x=SNR]{Figs/data_L11_G8.tex};
    \addlegendentry{\footnotesize $\Omega$=4, {DoF}=21}

\addplot
    [dashed, line width=1.pt, mark = o, mark size=4pt,mark options={solid, fill=red!90}, red!100]
    table[y=RsymproposedL11G8t2omg4DoF24,x=SNR]{Figs/data_L11_G8.tex};
    \addlegendentry{\footnotesize $\Omega$=4, {DoF}=24~\cite{naseritehrani2026cache}}
    \addplot
    [dashed, line width=1.pt, mark = +, mark size=4pt ,mark options={solid, fill=blue!90}, blue!90]
    table[y=RsymproposedL11G8t2omg4DoF27,x=SNR]{Figs/data_L11_G8.tex};
    \addlegendentry{\footnotesize $\Omega$=4, {DoF}=27}
    \addplot
    [dash dot,line width=1.pt,  mark = x, mark size=4pt ,mark options={solid, fill=blue!90}, blue!90]
    table[y=RsymproposedL11G8t2omg4DoF30,x=SNR]{Figs/data_L11_G8.tex};
    \addlegendentry{\footnotesize $\Omega$=4, {DoF}=30}

    \end{axis}

    \end{tikzpicture}
    }
    \label{fig:plot_7}
{\footnotesize (c)}
\end{minipage}
\caption{Performance comparison of the proposed asymmetric scheduling. (a) DoF granularity enhancement, $L=11$, $G=8$, $t\in\{1,2,3\}$, $\Omega\in\{4,5\}$. (b) DoF granularity and maximum DoF enhancement, $L=11$, $t=2$, $\Omega=4$, $G\in\{5,8,11\}$. (c) Rate performance for $(L,G,t)=(11,8,2)$.}
\label{fig:performance_comparison}
\vspace{-0.5cm}
\end{figure*}
We use numerical results to validate the effectiveness of the proposed scheduling.
Unless stated otherwise, $K=20$. $\mathrm{SNR}=P/N_0$, where $P$ is the available BS transmit power. Rates are evaluated using the max--min formulation in~\cite[Eq.~(36)]{naseritehrani2026cache}, under the same channel distribution and power constraint for all schemes.

%
%
For different $(t,\Omega)$ at fixed $(L,G)\!=\!(11,8)$,
Fig.~\ref{fig:performance_comparison}(a) shows that the proposed greedy
scheduling enlarges the achievable DoF set over the symmetric scheduling
in~\cite{naseritehrani2026cache}, \color{black}with no tight converse yet available. \color{black}
By relaxing the symmetry constraint, the proposed solution 
adapts across these configurations and fills DoF gaps unavailable to the symmetric scheduling, enabling new per-user stream allocations beyond the coarse common-stream operating points induced by \(\CB_\Omega\) in~\cite{naseritehrani2026cache}.
Fig.~\ref{fig:performance_comparison}(b) shows the DoF enlargement 
as the number of Rx antennas \(G\) scales. 
The symmetric scheduling in~\cite{naseritehrani2026cache} saturates once the common stream count is limited by the Tx-side null-space constraint rather than the Rx dimension, i.e., when \color{black}$\lfloor\eta_{\rm Rx}\rfloor\geq\lfloor\eta_{\rm Tx}\rfloor$ \color{black} in~\eqref{eq:eta_def}.
Hence, increasing $G$ beyond the saturation point \(G_{\rm sat}\triangleq \hat \beta \color{black}\lfloor\eta_{\rm Tx}\rfloor\color{black}=6\) does not improve DoF as it does not unlock new symmetric \(\beta\)-points. 
In contrast, the proposed scheduler exploits the additional Rx dimensions via user-dependent stream allocations, yielding more linearly decodable operating points and finer spatial-multiplexing control under the same target-user set. 
%

\color{black}
To quantify construction reliability and search effort in
Table~\ref{tab:greedy_success}, we use
$p_{\rm suc}\!=\!N_{\rm suc}/N_{\rm tr}$, where success requires a
complete regular C1--C2-certified schedule, and
$\overline N_{\rm FC}$ denotes the average number of feasibility
checks per trial. For these trials, at most $10$ candidate swaps are examined during repair.
The trials differ only through randomized ordering of admissible candidate and repair choices; hence, $p_{\rm suc}$ quantifies the empirical reliability of the greedy construction, whereas schedule existence is independently assessed
by an exhaustive regular-assignment check. 
Under the common search
setting in Table~\ref{tab:greedy_success}, all selected realizations
succeed in all $1000$ trials except
$\mathrm{DoF}_{\rm sch}\!=\!33$, for which
$p_{\rm suc}\!=\!0.995$. The construction effort is nonmonotonic,
and a higher DoF does not necessarily incur a larger subpacketization
factor: $\mathrm{DoF}_{\rm sch}\!=\!42$ is achieved with
$\delta\!=\!7$, versus $\delta\!=\!13$ at
$\mathrm{DoF}_{\rm sch}\!=\!39$.
\begin{table}[t]
\color{black}\centering
\caption{Greedy construction reliability for
$(L,G,\Omega,t)=(14,11,4,2)$ under
$(J_{\max},\tau,N_{\rm tr})=(20,2,1000)$.}
\vspace{-2mm}
\label{tab:greedy_success}
\scriptsize
\setlength{\tabcolsep}{2.8pt}
\renewcommand{\arraystretch}{1.05}
\begin{tabular}{@{}c c c c c c@{}}
\toprule
$\eta$ & $m$ & $\mathrm{DoF}_{\rm sch}$ &
$\delta$ & $p_{\rm suc}$ &
$\overline{N}_{\rm FC}$\\
\midrule
1 & 4 & 24 &  2 & 1.000 & 10.00  \\
2 & 2 & 30 &  5 & 1.000 & 34.17  \\
2 & 3 & 33 & 11 & 0.995 & 128.52 \\
3 & 1 & 39 & 13 & 1.000 & 78.00  \\
3 & 2 & 42 &  7 & 1.000 & 72.26  \\
\bottomrule
\end{tabular}
\end{table}
%
For Tx-side scaling at $(G,\Omega,t)=(11,4,2)$, varying
$L$ over $\{8,11,14\}$ yields
$\mathrm{DoF}_{\rm sch}=\{24,33,42\}$ and
$\overline{N}_{\rm FC}=\{10,128.52,72.26\}$, showing nonmonotonic
construction effort.
\color{black}
Fig.~\ref{fig:performance_comparison}(c) illustrates how finer DoF granularity translates into broader set of operating points for  enhanced finite-SNR symmetric rate optimization, compared with the symmetric scheme in~\cite{naseritehrani2026cache} and the MU-MIMO baseline with local caching gain only.
For fixed $\Omega=4$, the symmetric design has \(\CB_{\Omega=4} = \{3,6\}\), yielding $\mathrm{DoF}_{\mathrm{ref}}\in\{12, 24\}$, whereas the proposed scheduling expands the feasible stream allocations to $\beta_k\in\{4,5,7,8\}$, 
and hence the achievable DoF options to
$\mathrm{DoF}_{\mathrm{sch}} \in \{15, 18, 21, 27, 30\}$. 
The enlarged set of feasible stream allocations allows the scheduler to adaptively choose the rate-maximizing configuration at each SNR, resulting in better symmetric rates than the baselines 
in all SNR regions.

%

\section{Conclusion}
\label{sec:conclusions}
%
%
We characterized asymmetric stream allocation for one-shot linearly
decodable MIMO-CC delivery. The derived stream-count condition
generalizes the symmetric common-stream constraint to user-dependent
allocations, while the proposed greedy multicast scheduler fills the
DoF gaps left by symmetric scheduling and certifies linear decodability
of any returned schedule. Numerical results demonstrate finer DoF
granularity and finite-SNR symmetric-rate gains.

\bibliographystyle{IEEEtran}
\bibliography{conf_short,IEEEabrv,references,whitepaper}
\end{document}